\tikzset{>=latex}
\newtheorem{theorem}{Theorem}[section]
\theoremstyle{definition}
\theoremstyle{remark}
\newtheorem*{remark}{Remark}
\newcommand{\setword}[2]{%
  \phantomsection
  #1\def\@currentlabel{\unexpanded{#1}}\label{#2}%
}
\newcommand{\dd}{\textnormal{d}}
\newcommand{\R}{\mathbb{R}}
\newcommand{\RM}{\mathcal{R}}
\newcommand{\abs}[1]{\left\lvert #1\right\rvert}
\newcommand{\set}[1]{\left\{#1\right\}}
\renewenvironment{abstract}{%
\hfill\begin{minipage}{0.95\textwidth}
\rule{\textwidth}{1pt}}
{\par\noindent\rule{\textwidth}{1pt}\end{minipage}}
\renewcommand\@maketitle{%
\hfill
\begin{minipage}{0.95\textwidth}
\vskip 2em
\let\footnote\thanks 
{\Large \bf \@title \par }
\vskip 1.5em
{\large \@author \par}
\end{minipage}
\vskip 1em \par
}
\begin{document}
%
\title{\nohyphens{An optimal control model of mosquito reduction management in a dengue endemic region}}
\author[a,*]{Karunia Putra Wijaya}
\author[a]{Thomas G\"{o}tz}
\author[b]{Edy Soewono}
\affil[a]{\small\emph{Mathematical Institute, University of Koblenz, 56070 Koblenz, Germany}}
\affil[b]{\small\emph{Department of Mathematics, Bandung Institute of Technology, 40132 Bandung, Indonesia}}
\affil[$\ast$]{Corresponding author. Email: \href{mailto:karuniaputra@uni-koblenz.de}{karuniaputra@uni-koblenz.de}}
\maketitle
\begin{abstract}
{\textbf{Abstract}}: \emph{Aedes aegypti} is known as the responsible vector transmitting dengue flavivirus. Unavailability of medication to cure the transmission of the virus in the human blood becomes a global health issue in recent decades. World epidemiologists are encouraged to focus on the investigation over the effective and inexpensive way to prevent dengue transmission, i.e. mosquito control. In this paper, we present a model depicting the dynamics of mosquito population based on indoor-outdoor life cycle classification. The basic mosquito offspring number was obtained and analysis of equilibria was shown. We brought along a discussion on the application of optimal control to the model in which two simultaneous schemes were introduced. The first scheme is done by disseminating chemical like temephos in spots where eggs and larvae develop, meanwhile the second scheme is done by deploying fumigation through areas where adult mosquitoes prevalently nest, indoor as well as outdoor. A version of the gradient-based method was presented to set up a workflow in minimizing the objective functional with respect to some control variables. Numerical results from the analysis of the basic mosquito offspring number with constant control and from that with optimal control suggested that the application of fumigation is preferable over that of temephos. It was also suggested that applying both control schemes simultaneously gives the most significant reduction in the population.
~\\
~\\
{\textbf{Keywords}}: \textsf{Mosquito population dynamics, mosquito reduction management, basic mosquito offspring number, optimal control}
\vspace*{-5pt}
\end{abstract}


\section{Introduction}
It is known that \textit{Aedes aegypti} mosquitoes be the primary transmitters of dengue fever in the world \cite{BBG02c}. These species were predominantly tropical species and confined to coastal areas. Now they are widespread inland and cause deadly morbidities by means of dengue fever in, mostly, Southeast Asia, Africa, USA, Australia, Brazil, Argentina, Caribbean Islands, China, India, Japan, and Portugal \cite{PT13c,EG07c}. The areas where \textit{Aedes aegypti} mosquitoes most probably nest, where they have also a tremendous threat of dengue outbreaks, are densely urbanized areas \cite{BHL03c,CS06c}. Several studies confirmed that controling mosquito population had constricted the area of dengue endemicity throughout the globe, beside it helped to bring down the cost regarding a high number of people under surveillance. Even vector control has been considerably inexpensive, a government's tendency to take out concern over continual attempt on it leads to a necessity of a more efficient and effective control management. In order to precisely manage the mosquito population, one needs to model the dynamics of such population and to further investigate the effectivity of control intervention that is acted to it.

\textit{Aedes aegypti} belongs to species that inhabit in domestic water containers: bath vessels, flowerpots, drums, tins, unused tyres, untreated swimming pools, or even in curved broads where it is possible for water to last on for long time \cite{LHG87c}. Both male and female adult mosquitoes feed on nectar. Only females require additional blood sources to obtain nutrients before producing eggs and during eggs' maturation. Note that a single female can lay down at one time about 100--200 immersing eggs in water \cite{CHF10c}. Potential eggs can be produced by \emph{Aedes aegypti} females up to five batches during their lifetime.  An egg needs 2--5 days to maturate its living embryo, depending on the water temperature \cite{JHH65c,MC11c}. In an advanced growth, each egg turns to a larva that posteriorly withstands for 5--10 days depending on the water temperature and air humidity \cite{GGM95c}. This living larva usually eats algae and microorganisms in the water surface, making each individual competes with its own for logistics. A discussion about competition among larvae becomes important after the fact that in a joint container, \textit{Aedes albopictus} larvae outcompete \textit{Aedes aegypti} larvae, thus the winners develop at a faster rate \cite{Bar96c}. In the prescribed range of lifetime, each larva undergoes four times skin exfoliation and turns to a pupa at the very end of the processes. An idle pupa needs to wait for 1--5 days before it metamorphoses into an adult \cite{MRB12c}. In addition, the living adult can generally survive for 10 days, or in some extreme cases for 2--4 weeks \cite{ZK10c}.

\textsc{Tabachnick} \emph{et al.} \cite{TMP78c} found three polytypic origins where \textit{Aedes aegypti} breeds: domestic (urban housing including its narrowing environment), sylvan (rural areas, and for some cases, tree holes and leaf axils), and peridomestic (artificial plantation areas such as coconut groves and farms). In general, separation of the origins as indoor and outdoor had also been highlighted. In \cite{IAA12c}, the authors notified that the number of \textit{Aedes aegypti} differed based on indoor-outdoor classification and heterogeneity of containers. A brief corresponding result showed that \textit{Aedes aegypti} species constituted as the most abundant species in indoor containers as compared to the other tested species. Some reference also mentioned that in indoor containers, the competition among larvae had not always been the case since a particular tendency made \textit{Aedes aegypti} grew more than \textit{Aedes albopictus} \cite{CHC71c}. 

By a basic idea of incorporating control measures to a mathematical model of mosquito population dynamics, several control schemes were tested towards fighting the spread of \textit{Aedes aegypti} mosquitoes. We highlight, for instances, the utilization of ultra low volume (ULV) insecticide \cite{LCM12c} combined with temephos \cite{WGS13c}, also the sterile insect technique (SIT) \cite{CT75c,PM87c,TYE10c,FMO13c} as a Genetic-based Vector Control (GVC). However, in this paper we do not take the epidemiology within mosquito population, i.e. the infection exposures, into account. Therefore, a susceptible-infected segregation is no longer in use. We accentuate a model of mosquito population dynamics based on indoor-outdoor life cycle classification and an introduction of control intervention as well as the investigation over the best strategy for reducing the population with cost as cheap as possible. As control apparatuses, we consider temephos (mainly to kill larvae) and fumigation (to kill adults by blocking their respiration process) after the fact that they have already been well-used apparatuses in an integrated mosquito eradication programme.

We organize the rest of the paper as follows. In Section 2, we set up a mathematical model capturing the dynamics of age-segregated mosquito population based on indoor-outdoor life cycle classification. We add two control measures to the system in order to check how the population trajectories respond to such control. In Section 3, we examine the biological meaningfulness of the model based on the positivity of the system's solution, also the existence and stability of equilibria. So far we use plausible constant values for the control measures. In Section 4, we discuss an optimal control model that generates time-variant control measures solving a proposed optimization problem. In seeking an optimal solution, we first use the indirect method to generate the state-adjoint-gradient system and then use the gradient method to solve the generated system in an algorithmic workflow. In Section 5, we do some numerical tests to bring forward visualizations of the model.

\section{The mathematical model}
To capture the dynamics of mosquito population, let us first classify the population into compartments based on age-segregation: egg $E$, larva $L$ and adult $A$. We are not concerned with the dynamics of pupae based on our assumption that there will be no deaths and no imbalance inflows and outflows contributed in the changes of the population of pupae, since then the population remains constant. Based on origin classification, the eggs and larvae differ from indoor $E_1,L_1$ and outdoor $E_2,L_2$. No more segregation for the adults since every individual can fly to wherever it prefers, indoor as well as outdoor. We introduce two control measures $u_1$ and $u_2$ representing the rates of use of temephos and fumigation, respectively. Let $[0,T]$ be the range of observation time $t$ and $x=(E_1,E_2,L_1,L_2,A)\in L^2([0,T];\R^5)$ be the state variable. Our model is preliminarily exhibited by the following equation
\begin{equation}\label{eq:stateaa}
\dot{x}(t)=f(x(t),u(t)),\quad x(0)=x_0\succeq 0.
\end{equation}
We assume that adult mosquitoes select indoor breeding sites with the probability $p$ and, therefore, select outdoor breeding sites with the probability $1-p$. Alongside with the introduction of the tendency-based probability, we denote $\mu$ as the corresponding rate of the adults to lay eggs. It is assumed that natural deaths can occur in all compartments, thus we denote $\eta_{\{1,\cdots,5\}}$ as the corresponding rates. In an average period, both indoor and outdoor eggs metamorphose into larvae with the transition rates $\alpha_{\{1,2\}}$. The same situations hold for indoor and outdoor larvae respectively, that they metamorphose into adults with the transition rates $\beta_{\{1,2\}}$. As an underlying discussion, we introduce two logistic coefficients $\sigma_{\{1,2\}}$ accounting for the phenomena of competition among larvae. Assume that all indoor or outdoor water containers are treated as homogeneous such that $\frac{\sigma_{1}}{\sigma_{2}}$ proportionates to some constant $M$. A suitable choice for $M>1$ leads to the situation where outdoor breading sites have $M$-times larger carrying capacity than indoor breading sites. After all, the model \eqref{eq:stateaa} is unfolded as
\begin{subequations}
    \label{eq:compwiseaa}
\begin{align}
    \dot{E}_1 &= \mu p A - (\alpha_1+\eta_1) E_1 - q u_1 E_1\\
    \dot{E}_2 &= \mu (1-p) A - (\alpha_2+\eta_2) E_2 \\
    \dot{L}_1 &= \alpha_1 E_1 - \sigma_1 L_1^2 - (\beta_1+\eta_3) L_1 - u_1 L_1 \\
    \dot{L}_2 &= \alpha_2 E_2 - \sigma_2 L_2^2 - (\beta_2+\eta_4) L_2\\
    \dot{A} &= \beta_1 L_1 + \beta_2 L_2 - \eta_5 A - u_2 A.
\end{align}
\end{subequations}
Two control measures are added into the model: the rate of use of temephos $u_1$ and that of fumigation $u_2$. Temephos is disseminated into indoor water containers to kill larvae and eggs. The weighting factor $q\in \left[0,\tfrac12\right]$ accounts for the fact that the dissemination has less impact on eggs. Meanwhile, fumigation directly targets adult mosquitoes.

It is assumed that all the controls $u$ belong to a set of admissible control $U\subset L^2([0,T];\R^2_+)$. Define 
\begin{equation}
X:=\set{x:\dot{x}(t)=f(x(t),u(t)),t\in[0,T],x(0)=x_0\succeq 0,u\in U}
\end{equation}
as the set of feasible states. It is intuitively believed that the higher $u$ imposed to the model, the smaller the population size. On the other hand, the smaller $u$, the larger the population size. The larger $u$ means that a policy maker needs to spend more funds, otherwise there will be no significant reduction to the mosquito population size. Given positive trade-off constants $\omega_{x,\{1,\cdots,5\}}$ and $\omega_{u,\{1,2\}}$, the following objective functional accommodates the necessity of balancing the situation between significant population reduction and limitation of funds:
\begin{equation}
\label{eq:obj}
   J(u) :=\frac1{2 T} \int_0^T \sum_{i=1}^5 \omega_{x,i} x_i^2(t)
   	+\sum_{j=1}^2 \omega _{u,j} u_j^2(t)\, \dd t.
\end{equation}
Now our optimization problem reads as
\begin{equation}
\text{find }(x,u)\in X\times U\text{ such that }J(u)\rightarrow\min.
\end{equation}

\section{Model analysis}
We need to ensure that our model is biologically meaningful. The following theorem gives a primary meaningfulness of the model: whenever the initial condition is positive, then the solution points in the forward time stays positive.
\begin{theorem}
Consider the model \eqref{eq:stateaa} where $u\in U$. If $x_0\succeq0$ then $x(t)\succeq0$ for all $t>0$.
\end{theorem}
\begin{proof}
Following the steps in \cite{WGS13c}, we let $n$ be a $(5\times5)$-matrix representing a collection of all normal vectors (by rows) to the boundary of nonnegative orthant $\partial\R^5_+$. Thus we have $n=-I_5$ where $I_5$ denotes the identity matrix. To ensure that the solution trajectory does not walk out of the nonnegative orthant, one only needs to check the solution points in the boundary. Notice that at $i$-th boundary, $\partial_i\R^5_+$, 
\begin{equation*}
\left.\left[n f(x,u)\right]_i\right|_{x\in\partial_i\R^5_+,u\in U}\leq 0.
\end{equation*}
This means that the direction of the evolution of the a solution point in a boundary is in counter-direction or at least perpendicular to the corresponding normal vector. Thus, it follows that the solution must not leave $\R^5_+$ for all $t>0$.
\end{proof}
Let us first consider the autonomous system~\eqref{eq:stateaa} with constant controls $u\in\R^2_+$. For abbreviation purpose, let $s_1=\mu p$, $s_2=\mu(1-p)$, $s_3=\alpha_1$, $s_4=\alpha_2$, $s_5=\beta_1$, $s_6=\beta_2$ and $d_1=\alpha_1+\eta_1+qu_1$, $d_2=\alpha_2+\eta_2$, $d_3=\beta_1+\eta_3+u_1$, $d_4=\beta_2+\eta_4$, $d_5=\eta_5+u_2$. In order to obtain equilibria of the system, we need to solve $f(x,u)=0$. It follows from this process that
\begin{subequations}
\label{eq:E}
\begin{eqnarray}
A^{\ast}&=&\frac{s_5L_1^{\ast}+s_6L_2^{\ast}}{d_5}\\
E_1^{\ast}&=&\frac{s_1 (s_5L_1^{\ast}+s_6L_2^{\ast})}{d_1d_5}\\
E_2^{\ast}&=&\frac{s_2 (s_5L_1^{\ast}+s_6L_2^{\ast})}{d_2d_5}
\end{eqnarray}
where $(L_1^{\ast},L_2^{\ast})$ satisfies the leading equations
\begin{eqnarray}
L_1^2+a_1L_1+b_1L_2&=&0\label{eq:L1}\\
L_2^2+a_2L_2+b_2L_1&=&0.\label{eq:L2}
\end{eqnarray}
\end{subequations}
All the constants that belong to the last equations are given by
\begin{equation*}
a_1=\frac{d_3}{\sigma_1}-\frac{s_1s_3s_5}{d_1d_5\sigma_1},\,
b_1=-\frac{s_1s_3s_6}{d_1d_5\sigma_1},\,
a_2=\frac{d_4}{\sigma_2}-\frac{s_2s_4s_6}{d_2d_5\sigma_2},\,
b_2=-\frac{s_2s_4s_5}{d_2d_5\sigma_2}.
\end{equation*}
It yields from \eqref{eq:L1}--\eqref{eq:L2} three equilibria
\begin{eqnarray*}
Q_1&=&(0,0,0,0,0)\\
Q_2&=&\left(-\frac{s_1s_5a_1}{d_1d_5},-\frac{s_2s_5a_1}{d_2d_5},-a_1,0,-\frac{s_5a_1}{d_5}\right)\\
Q_3&=&\left(-\frac{s_1s_6a_2}{d_1d_5},-\frac{s_2s_6a_2}{d_2d_5},0,-a_2,-\frac{s_6a_2}{d_5}\right).
\end{eqnarray*}
The definitions of $Q_2$ and $Q_3$ only make sense in biology if only $a_1,a_2$ are negative. As if they are stable, it simply means that after very long time, either the compartment of indoor larvae or that of outdoor larvae tends to extinction while the other compartments stay alive. In the biological context, such situation can hardly happen. In the next writing, we demonstrate that the choices of $a_1,a_2$ being positive lead to a more interesting discussion. First, we introduce a measure whose cubic value is given by
\begin{equation}\label{eq:Raa}
\RM(u)^3:=\frac{s_1s_3s_5}{d_5d_1d_3}+\frac{s_2s_4s_6}{d_5d_2d_4}.
\end{equation}
The following identity is found after some algebraic manipulations
\begin{equation}\label{eq:id}
\frac{b_1b_2-a_1a_2}{d_3d_4}\sigma_1\sigma_2=\RM(u)^3-1.
\end{equation}
Using one as the threshold value, the following theorem justifies the stability of zero equilibrium $Q_1$ by considering the nominal of $\RM(u)$ relative to the threshold. Meanwhile, the next theorem shows the existence of a positive nontrivial equilibrium (also well-known as coexistence equilibrium).
\begin{theorem}
The zero equilibrium $Q_1$ is locally asymptotically stable if $\RM(u)<1$ and is unstable if $\RM(u)>1$.
\end{theorem}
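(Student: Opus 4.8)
The plan is to reduce the stability of $\QQ_1$ to the location of the roots of the characteristic polynomial of the Jacobian of $\ff(\cdot,\uu)$ at the origin, written in a factored form that exhibits $\RM(\uu)$ explicitly. First I would linearise. The only nonlinear terms in \eqref{eq:compwise} are the competition terms $\sigma_1\LL_1^2$ and $\sigma_2\LL_2^2$, whose gradients vanish at $\LL_1=\LL_2=0$, so the Jacobian at $\QQ_1=\nul$ is the constant matrix
\begin{equation*}
J=\begin{pmatrix}-d_1&0&0&0&s_1\\0&-d_2&0&0&s_2\\s_3&0&-d_3&0&0\\0&s_4&0&-d_4&0\\0&0&s_5&s_6&-d_5\end{pmatrix},
\end{equation*}
and by the principle of linearised stability it then suffices to show that every eigenvalue of $J$ has negative real part when $\RM(\uu)<1$, and that $J$ has an eigenvalue with positive real part when $\RM(\uu)>1$.

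Next I would compute and factor $\det(\lambda I-J)$. Expanding along the $\AB$-column and then along the $\EE_1,\EE_2$-rows (equivalently, writing $\lambda I-J=(\lambda I+\mathrm{diag}(d_i))\bigl(I-(\lambda I+\mathrm{diag}(d_i))^{-1}S\bigr)$, where $S\ge 0$ collects the production rates $s_1,\dots,s_6$), I expect
\begin{equation*}
\det(\lambda I-J)=\Bigl(\prod_{i=1}^{5}(\lambda+d_i)\Bigr)\bigl(1-h(\lambda)\bigr),\qquad h(\lambda):=\frac{s_1s_3s_5}{(\lambda+d_1)(\lambda+d_3)(\lambda+d_5)}+\frac{s_2s_4s_6}{(\lambda+d_2)(\lambda+d_4)(\lambda+d_5)},
\end{equation*}
so that $h(0)=\RM(\uu)^3$ by \eqref{eq:R}. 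For the stable case $\RM(\uu)<1$: if $\lambda$ were an eigenvalue with $\mathrm{Re}\,\lambda\ge 0$, then $|\lambda+d_i|\ge d_i>0$ for every $i$, so $\det(\lambda I-J)=0$ would force $h(\lambda)=1$; but then $|h(\lambda)|\le \tfrac{s_1s_3s_5}{d_1d_3d_5}+\tfrac{s_2s_4s_6}{d_2d_4d_5}=\RM(\uu)^3<1$, a contradiction, so $\QQ_1$ is locally asymptotically stable. For the unstable case $\RM(\uu)>1$: on $[0,\infty)$ the function $h$ is continuous and strictly decreasing, with $h(0)=\RM(\uu)^3>1$ and $h(\lambda)\to 0$ as $\lambda\to\infty$, so by the intermediate value theorem there is some $\lambda^{\ast}>0$ with $h(\lambda^{\ast})=1$; then $\det(\lambda^{\ast}I-J)=0$ while $\prod_i(\lambda^{\ast}+d_i)>0$, so $\lambda^{\ast}$ is a genuine positive real eigenvalue of $J$ and $\QQ_1$ is unstable.

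The one step that needs genuine care is the determinant factorisation in the second paragraph: one must track the bordered coupling structure of $J$ — the $\AB$-column feeding the $\EE$-rows, the $\EE$-variables feeding the $\LL$-rows, and the $\LL$-variables feeding back into $\AB$ — in order to land on the clean product form rather than an opaque degree-five polynomial. Equivalently, in the dependency digraph the only families of vertex-disjoint cycles are the empty family and each of the two length-three cycles $\AB\to\EE_1\to\LL_1\to\AB$ and $\AB\to\EE_2\to\LL_2\to\AB$ taken separately (they cannot be combined, since they share the node $\AB$); once this is in hand, the remaining estimate and intermediate-value argument are short. As a consistency check one obtains $\det(-J)=\bigl(\prod_{i=1}^{5}d_i\bigr)\bigl(1-\RM(\uu)^3\bigr)$, whose sign change at $\RM(\uu)=1$ is exactly the one already encoded in identity \eqref{eq:id}.
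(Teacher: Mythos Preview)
Your argument is correct and proceeds along a genuinely different line from the paper's own proof. The paper expands the characteristic polynomial $-\det(J(\QQ_1,\uu)-\lambda I_5)$ in powers of $\lambda$, checks that all coefficients are positive when $\RM(\uu)<1$, and then argues via sign and derivative considerations on the real axis that no root with positive real part can occur; for $\RM(\uu)>1$ it uses the negativity of the constant term to locate a positive real root. Your route instead passes through the factorisation $\det(\lambda I-J)=\prod_i(\lambda+d_i)\,(1-h(\lambda))$, which makes $\RM(\uu)^3=h(0)$ visible at a glance and turns the stable case into a one-line modulus estimate: for any $\lambda$ with $\mathrm{Re}\,\lambda\ge 0$ one has $|h(\lambda)|\le h(0)<1$, which rules out \emph{complex} eigenvalues in the closed right half-plane directly, a point the paper's real-variable argument does not fully address. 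The instability cases coincide in spirit (both ultimately invoke the intermediate value theorem on the real line), but your formulation via $h$ is cleaner and dovetails naturally with the next-generation-matrix viewpoint the paper adopts later, since $h(\lambda)$ is exactly $\rho\bigl(F(\uu)(V(\uu)+\lambda I)^{-1}\bigr)^3$ evaluated on the two three-cycles of the dependency graph.
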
 
\begin{proof}
Checking the stability of the zero equilibrium $Q_1$ in a local view is similar to see the behavior of the solution of the linearized version of \eqref{eq:stateaa} around $Q_1$, i.e. $\dot{x}=\nabla f(Q_1,u)(x-Q_1)$. In this case, $\nabla f(Q_1,u)$ stands for the \textsc{Jacobi}an of $f$ evaluated at $Q_1$. Unfolding this $\nabla f(Q_1,u)$, we get
\begin{equation*}
\nabla f(Q_1,u)=\left[ \begin {array}{ccccc} 
-d_1&0&0&0&s_1\\
0&-d_2&0&0&s_2\\
s_3&0&-d_3&0&0\\
0&s_4&0&-d_4&0\\
0&0&s_5&s_6&-d_5
\end {array} \right]. 
\end{equation*}
Here, $Q_1$ is locally asymptotically stable if and only if all eigenvalues of $\nabla f(Q_1,u)$ have negative real part. To see this, a simple cofactorization method computes the determinant of $\nabla f(Q_1,u)-\lambda I_5$ as the sum of all its cofactors, given by the negative of 
\begin{equation*}
(d_1+\lambda)(d_2+\lambda)(d_3+\lambda)(d_4+\lambda)(d_5+\lambda)-s_2s_4s_6(d_1+\lambda)(d_3+\lambda)-s_1s_3s_5(d_2+\lambda)(d_4+\lambda).
\end{equation*}
Clearly, $\text{coeff}\left(\lambda^{\{3,4,5\}}\right)$ are real positive, meanwhile
\begin{eqnarray*}
\text{coeff}(\lambda^0)&=&d_1d_2d_3d_4d_5-s_1s_3s_5d_2d_4-s_2s_4s_6d_1d_3\\
\text{coeff}(\lambda^1)&=&d_1d_2d_3d_4+\cdots+d_2d_3d_4d_5-s_1s_3s_5(d_2+d_4)-s_2s_4s_6(d_1+d_3)\\
\text{coeff}(\lambda^2)&=&d_1d_2d_3+\cdots+d_3d_4d_5-s_1s_3s_5-s_2s_4s_6.
\end{eqnarray*}
If $\RM(u)<1$, or equivalently $\frac{s_1s_3s_5}{d_5d_1d_3}+\frac{s_2s_4s_6}{d_5d_2d_4}<1$, one can see that $\text{coeff}\left(\lambda^{\{0,1,2,3,4,5\}}\right)$ are real positive. Since $\lim_{\lambda\rightarrow\pm\infty}-\det(\nabla f(Q_1,u)-\lambda I_5)=\pm\infty$, $\left.-\det(\nabla f(Q_1,u)-\lambda I_5)\right|_{\lambda=0}>0$ and
\begin{equation*}
\left.-\frac{\dd \det(\nabla f(Q_1,u)-\lambda I_5)}{\dd \lambda}\right|_{\lambda>0}>0,\quad \left.-\frac{\dd^2 \det(\nabla f(Q_1,u)-\lambda I_5)}{\dd \lambda^2}\right|_{\lambda>0}>0,
\end{equation*}
then $\det(\nabla f(Q_1,u)-\lambda I_5)$ cannot have zero that has positive real part. Moreover, all zeros of $\det(\nabla f(Q_1,u)-\lambda I_5)$ have negative real part. If $\RM(u)>1$, then $\text{coeff}(\lambda^0)<0$. Then for however the values of $\text{coeff}(\lambda^1)$ and $\text{coeff}(\lambda^2)$, it follows that there must be at least one real positive zero. This means $Q_1$ is unstable. If $\RM(u)=1$, then a similar step can prove that $\det(\nabla f(Q_1,u)-\lambda I_5)$ has one zero with the value $0$ and the other zeros remain with negative real part. In this case, $Q_1$  is stable if the origin in the reduced system in the center manifold is stable.
\end{proof}

\begin{theorem}
If $a_1$ and $a_2$ are positive, a unique coexistence equilibrium $Q_{4}$ exists if $\RM(u)>1$ and does not if $\RM(u)\leq 1$. If $a_1$ and $a_2$ are negative, then there always exists a coexistence equilibrium for any value of $\RM(u)$. Else, then it always holds $\mathcal{R}(u)>1$, and therefore, there always exists a coexistence equilibrium.
\end{theorem}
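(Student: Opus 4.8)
The plan is to collapse the two scalar equations \eqref{eq:L1}--\eqref{eq:L2} into a single cubic in $\LL_1$ and then decide existence and uniqueness from its graph, splitting according to the three sign regimes of $(a_1,a_2)$. Two facts drive everything. First, from the explicit formulas for $b_1$ and $b_2$, both are strictly negative (all $s_i,d_i,\sigma_i>0$), hence $b_1b_2=|b_1||b_2|>0$. Second, by the identity \eqref{eq:id} the number $b_1b_2-a_1a_2$ carries the same sign as $\RM(\uu)^3-1$, hence as $\RM(\uu)-1$. Finally, by \eqref{eq:E} a coexistence equilibrium $\QQ_4$ is precisely a solution $(\LL_1,\LL_2)$ of \eqref{eq:L1}--\eqref{eq:L2} with $\LL_1>0$ and $\LL_2>0$, the remaining coordinates $\AB^\ast,\EE_1^\ast,\EE_2^\ast$ then being positive as positive combinations of $\LL_1,\LL_2$.

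First I would rewrite \eqref{eq:L1} as the equivalent relation $\LL_2=\LL_1(\LL_1+a_1)/|b_1|$; for $\LL_1>0$ this forces the admissibility constraint $\LL_1+a_1>0$. Substituting into \eqref{eq:L2}, cancelling the common factor $\LL_1\neq 0$ and clearing $|b_1|^2$, the coexistence problem becomes: find $\LL_1$ in the feasibility interval $I:=(\max\{0,-a_1\},\infty)$ with $\psi(\LL_1)=C$, where $\psi(\LL_1):=(\LL_1+a_1)\bigl(\LL_1^2+a_1\LL_1+a_2|b_1|\bigr)$ and $C:=|b_1|^2|b_2|>0$. On $I$ the linear factor $\LL_1+a_1$ is positive, so every root in $I$ automatically returns $\LL_2>0$, and $\psi(\LL_1)\to+\infty$ as $\LL_1\to\infty$.

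Then I would treat the three regimes separately. If $a_1,a_2>0$, then on $[0,\infty)$ both factors of $\psi$ are positive and strictly increasing, so $\psi$ rises strictly from $\psi(0)=a_1a_2|b_1|$ to $+\infty$; hence $\psi=C$ has a unique solution in $(0,\infty)$ iff $C>a_1a_2|b_1|$, equivalently $|b_1||b_2|>a_1a_2$, equivalently $b_1b_2-a_1a_2>0$, which by \eqref{eq:id} is exactly $\RM(\uu)>1$, and there is no such solution when $\RM(\uu)\le 1$. If $a_1,a_2<0$, the quadratic $P(\LL_1):=\LL_1^2+a_1\LL_1+a_2|b_1|$ has negative constant term, hence exactly one positive root $r_+$, and an elementary estimate gives $r_+>-a_1$; on $(-a_1,r_+)$ one has $\psi<0<C$, while on $(r_+,\infty)$ both factors of $\psi$ are positive with positive derivatives (using $\LL_1>r_+>-a_1$), so there $\psi$ increases strictly from $0$ to $+\infty$ and $\psi=C$ has exactly one root, which lies in $(r_+,\infty)\subset I$ --- a coexistence equilibrium exists for every value of $\RM(\uu)$. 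In the remaining case $a_1a_2\le 0$, one has $b_1b_2-a_1a_2\ge b_1b_2>0$, so \eqref{eq:id} gives $\RM(\uu)>1$; moreover at the left endpoint of $I$ the value of $\psi$ is nonpositive (it equals $0$ when $a_1\le 0$ and equals $a_1a_2|b_1|\le 0$ when $a_1>0\ge a_2$), so $\psi$ runs from a nonpositive value up to $+\infty$ and the intermediate value theorem produces a root in $I$, i.e. a coexistence equilibrium exists again.

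The step I expect to be the main obstacle is the regime $a_1,a_2<0$ (and the analogous bookkeeping inside the last case when $a_1<0$): there $\psi$ is not monotone on all of $I$, so one must carefully isolate the sub-interval $(r_+,\infty)$ on which $\psi$ is positive and increasing, verify both $r_+>-a_1$ and the sign of $P$ between its two roots, and thereby make sure the root solving $\psi=C$ genuinely yields $\LL_2>0$ rather than a spurious negative companion. The other ingredients --- the elimination producing $\psi$, the monotonicity computations via the product rule, and the reduction of the threshold comparison $C\gtrless a_1a_2|b_1|$ to $\RM(\uu)\gtrless 1$ through \eqref{eq:id} --- are routine.
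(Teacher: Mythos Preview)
Your proposal is correct and follows the same overall plan as the paper --- eliminate $\LL_2$, obtain a cubic in $\LL_1$, and split into the three sign regimes of $(a_1,a_2)$, invoking the identity \eqref{eq:id} to translate the threshold into $\RM(\uu)\gtrless 1$. The execution, however, is organised differently. The paper expands the cubic $P(\LL_1)=\LL_1^3+2a_1\LL_1^2+(a_1^2-b_1a_2)\LL_1+(b_1^2b_2-a_1b_1a_2)$ and, in each sub-case, factors out a root that is known to exist, studies the residual quadratic, and only \emph{afterwards} checks the sign of $\LL_2=\LL_1(\LL_1+a_1)/|b_1|$. You instead keep the cubic in the factored form $\psi(\LL_1)=(\LL_1+a_1)(\LL_1^2+a_1\LL_1+a_2|b_1|)$ against the constant target $C=|b_1|^2|b_2|$, and build the constraint $\LL_2>0$ into the problem from the outset via the feasibility interval $I=(\max\{0,-a_1\},\infty)$. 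This makes each case a one-line graph reading: in Case~1 you get strict monotonicity on $[0,\infty)$ and the threshold $C>\psi(0)=a_1a_2|b_1|$ drops out immediately; in Case~2 the computation $Q(-a_1)=a_2|b_1|<0$ gives $r_+>-a_1$, so $\psi<0$ on $(-a_1,r_+)$ and $\psi$ increases from $0$ on $(r_+,\infty)$, yielding a \emph{unique} coexistence equilibrium (the paper only shows existence here); in Case~3 the left-endpoint value $\psi\le 0<C$ and the intermediate value theorem suffice. Both arguments are elementary, but your packaging avoids the repeated a~posteriori positivity checks for $\LL_2$ that the paper carries through each sub-case.
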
 
\begin{proof}
A simple substitution in \eqref{eq:L1}-\eqref{eq:L2} makes the equilibrium state $L_{1}$ following
\begin{equation*}
L_{1}^{4}+2a_1L_{1}^{3}+(a_1^{2}-b_1a_2)L_{1}^{2}+(b_1^{2}b_2-a_1b_1a_2)L_{1}=0.
\end{equation*}
As $L_{1}=0$ results in equilibria that have been discussed before, now let us consider the remaining cubic
\begin{equation*}
P(L_{1}):=L_{1}^{3}+2a_1L_{1}^{2}+(a_1^{2}-b_1a_2)L_{1}+(b_1^{2}b_2-a_1b_1a_2).
\end{equation*}
Since the coefficient of $L_{1}^{3}$ is $1>0$, then $\lim_{L_{1}\rightarrow \pm \infty}P=\pm \infty$. We prove this theorem by considering the following cases.

\textbf{Case 1}: $a_1$ and $a_2$ are positive. In this case, three possibilities occur: $\RM(u)>1$, $\RM(u)<1$ and $\RM(u)=1$. If $\RM(u)>1$, then by the identity \eqref{eq:id} it holds that $b_1b_2>a_1a_2$. Thus we have $0>b_1^{2}b_2-a_1b_1a_2=P(0)$ since $b_1<0$. It follows from the value of $P(0)$ and $\lim_{L_{1}\rightarrow \infty}P= \infty$ that $P$ must have at least one positive root. Now we have to check that $L_{2}=\frac{L_{1}(L_{1}+a_1)}{-b_1}>0$. To see this, let $L_{1}=\rho^{2}$ be the corresponding root, where $\rho\in\R\backslash\{0\}$. Extracting $(L_{1}-\rho^{2})$ out of the cubic, we get $L_{1}^{2}+(2a_1+\rho^2)L_{1}+\mathcal{N}$ where
\begin{equation*}
\mathcal{N}=(a_1^{2}-b_1a_2)+\rho^{2}(2a_1+\rho^2).
\end{equation*}
Since $\mathcal{N}=\frac{-(b_1^{2}b_2-a_1b_1a_2)}{\rho^2}>0$ then together with the claim of the uniqueness of the positive root $L_{1}=\rho^{2}$, we need $(2a_1+\rho^2)>0$ to ensure that the remaining quadratic function does not have a real positive root. Since $a_1>0$, then $(\rho^2+a_1)>0$. Now we have proved that $L_{2}$ is positive, and by \eqref{eq:E}, the other states are also positive. If $\RM(u)<1$, then $P(0)>0$. Since $\lim_{L_{1}\rightarrow -\infty}P= -\infty$, this means there exists at least one negative root $L_{1}=-\varepsilon^2$ of the cubic, where $\varepsilon\in\R\backslash\{0\}$. Now extract $(L_{1}+\varepsilon^2)$ out of the cubic and we get the remaining quadratic function $L_{1}^{2}+(2a_1-\varepsilon^2)L_{1}+\mathcal{M}$ where
\begin{equation*}
\mathcal{M}:=(a_1^2-b_1a_2)-\varepsilon^2(2a_1-\varepsilon^2).
\end{equation*}
It turns out that $\mathcal{M}=\frac{b_1^{2}b_2-a_1b_1a_2}{\varepsilon^{2}}=\frac{P(0)}{\varepsilon^{2}}>0$. Thus further analysis confirms that a positive root exists whenever $(2a_1-\varepsilon^2)$ is strictly less than zero and the discriminant $\Delta=(2a_1-\varepsilon^2)^{2}-4\mathcal{M}\geq 0$. One can see that $\Delta=(2a_1-\varepsilon^2)^{2}-4(a_1^2-b_1a_2)+4\varepsilon^2(2a_1-\varepsilon^2)=(2a_1+3\varepsilon^2)(2a_1-\varepsilon^2)-4(a_1^2-b_1a_2)<0$ since $a_1,a_2>0$ and $b_1<0$. This contradicts the necessity that $\Delta\geq 0$. Then there does not exist any positive root of the cubic. At last, if $\RM(u)=1$, then $b_1^{2}b_2-a_1b_1a_2=0$ and we get the remaining quadratic function $L_{1}^{2}+2a_1L_{1}+(a_1^{2}-b_1a_2)$. Since $a_1,a_2>0$ and $b_1<0$, it is clear that the quadratic function does not have any real positive root.

\textbf{Case 2}: $a_1$ and $a_2$ are negative. If $\RM(u)>1$, then it is easily seen that one positive root exists. Now we have to check that $L_{2}=\frac{L_{1}(L_{1}+a_1)}{-b_1}>0$. To see this, as it is previously done together with the claim of the uniqueness of the positive root $L_{1}=\rho^{2}$, we need $(2a_1+\rho^2)>0$ to ensure that the remaining quadratic function does not have any real positive root. Then, $(\rho^2+a_1)>(2a_1+\rho^2)>0$ since $a_1<0$. Now we have proved that $L_{2}$ is positive, and therefore, the other states are also positive. If $\RM(u)=1$, we have $b_1^{2}b_2-a_1b_1a_2=0$ and the remaining quadratic function is $L_{1}^{2}+2a_1L_{1}+(a_1^{2}-b_1a_2)$. Since $\Delta=4a_1^{2}-4(a_1^{2}-b_1a_2)=4b_1a_2>0$ and $2a_1<0$, then a positive root exists whose uniqueness is confirmed by the value of $(a_1^{2}-b_1a_2)$. Thus, we need to proof that $L_{2}=\frac{L_{1}(L_{1}+a_1)}{-b_1}>0$. Let $L_{1}=\eta^{2}$ be such a root, then the other root must be $L_{1}=-(2a_1+\eta^{2})$ where $(2a_1+\eta^{2})=-\frac{(a_1^{2}-b_1a_2)}{\eta^{2}}$. However, the positive root is unique if $(a_1^{2}-b_1a_2)\leq 0$, leading to $\eta^{2}+a_1>\eta^{2}+2a_1=-\frac{(a_1^{2}-b_1a_2)}{\eta^{2}}\geq 0$. This confirms that $L_{2}>0$. If $\RM(u)<1$, then as before, we have at least one negative root $L_{1}=-\varepsilon^2$ and the remaining quadratic $L_{1}^{2}+(2a_1-\varepsilon^2)L_{1}+\mathcal{M}$. Observe that $\mathcal{M}>0$ and $(2a_1-\varepsilon^2)<0$ since $a_1<0$. Some factorization confirms that there exist two remaining positive roots $L^{(1,2)}_{1}=\{\rho^{2},\xi^{2}\}$ where $\xi^{2}=-(2a_1-\varepsilon^2+\rho^{2})$ and $\rho^{2}\xi^{2}=\mathcal{M}$. Then, we have to check the result of $L^{(1,2)}_{1}+a_1$ since this confirms the positivity of $L_{2}$, or furthermore, existence of a positive equilibrium. Since $(\rho^{2}+a_1)+(\xi^{2}+a_1)=-(2a_1-\varepsilon^2)+2a_1=\varepsilon^2>0$, this ensures that at least one of the roots is greater than $-a_1$, since then at least one root makes $L_{2}$ positive.

\textbf{Case 3}: else. Since it always holds $b_1b_2>a_1a_2$, this implies $\RM(u)>1$ and $P(0)=b_1^2 b_2-a_1b_1a_2<0$. Therefore, there is always (not necessarily unique) coexistence equilibrium. 
\end{proof}

All compartments have responsibility to generate offsprings in the sense that newborns are impossible whenever one compartment in the system remains zero for all time. Regarding this, we define a mosquito-free equilibrium (MFE) as a static condition where all compartments cannot reproduce for the next offsprings. This means that MFE is equivalent to the zero equilibrium $Q_1$. Following the next generation method \cite{DW02c}, we 
define two matrices $F(u),V(u)$ where $F(u)$ is a matrix with zeros in the main diagonal and $V(u)$ is a positive diagonal matrix such that $\nabla f(Q_1,u)=F(u)-V(u)$. Then we define a so-called \textit{next generation matrix} $G(u)=F(u)V(u)^{-1}$ which is nothing else but 
\begin{equation}
G(u) = \left(
\begin{array}{ccccc}
0 & 0 & 0 & 0 & \frac{s_1}{d_5}\\
0 & 0 & 0 & 0 & \frac{s_2}{d_5}\\
\frac{s_3}{d_1} & 0 & 0 & 0 & 0\\
0 & \frac{s_4}{d_2} & 0 & 0 & 0 \\
0 & 0 & \frac{s_5}{d_3} & \frac{s_6}{d_4} & 0
\end{array}
\right).
\end{equation}
The $(i,j)$-th element of this matrix represents the
average number of new individuals in the compartment $i$ produced by a single individual
from the compartment $j$ during the compartment $j$'s average individual lifetime period.
It is easy to verify that the spectral radius of $G(u)$ is given by
\begin{equation}
\max\set{|\lambda|:u\in\mathbb{R}^2_+,\,\det(G(u)-\lambda I_{5})=0}=\RM(u).
\end{equation}
This function $\RM=\RM(u)$ is often called as \textit{the basic mosquito offspring number}. The formulation in \eqref{eq:Raa} provides the dependency of the basic mosquito offspring number on several parameters and the control $u$. In Section~\ref{sec:numerbb}, we check the behaviour of the basic mosquito offspring number with respect to the control and several numerically unspecified parameters.

\begin{theorem}\label{thm:coexbb}
The existing coexistence equilibrium $Q_4$ is globally asymptotically stable in the nonnegative orthant $\R^5_+$ if $\RM(u)>1$.
\end{theorem}

\begin{remark}
We omit writing the details of the proof of Theorem~\ref{thm:coexbb}. Furthermore, it requires the use of a \textsc{Lyapunov} function and the \textsc{LaSalle}'s Invariance Principle. As another reference, one can further see from numerical results that the solution of the model tends to $Q_4$ asymptotically, under a supplemented condition that the set of parameters is chosen such that $\RM(u)>1$.  
\end{remark}

\section{Optimal control problem}

Recall our optimization problem: 
\begin{equation}
\label{eq:optaa}
\text{ find }(x,u)\in X\times U\text{ such that }J(u)\rightarrow\min.
\end{equation}
Denote by $U$ a set of admissible controls where it is assumed to be compact. Let $g$ be a function such that $J(u)=\int_{0}^{T}g(x,u)\,\dd t$. Let $\bar{u}$ be an optimal control that solves \eqref{eq:optaa}. Consider a small variation around the optimal control
\begin{equation}
u^{\epsilon}(t)=\bar{u}(t)+\epsilon\kappa(t),\quad \abs{\epsilon}\ll 1.
\end{equation}
Plugging this variation into \eqref{eq:stateaa} together with a nonnegative initial condition, we have the resulting perturbed state $x^{\epsilon}$, where it holds
\begin{equation}
\dot{x}^{\epsilon}-\dot{\bar{x}}=f(x^{\epsilon},u^{\epsilon})-f(\bar{x},\bar{u}).
\end{equation}
Note from the model that $f\in C^{\infty}(\R^5\times\R^2;\R^5)$, thus we have the following Taylor expansion
\begin{equation}
\frac{\dd}{\dd t}(x^{\epsilon}-\bar{x})=f_{x}(\bar{x},\bar{u})(x^{\epsilon}-\bar{x})+\epsilon f_{u}(\bar{x},\bar{u})\kappa+\mathcal{O}(\epsilon^2).
\end{equation}
Working at $\mathcal{O}(\epsilon^2)$, we let $\varphi$ be the solution of the differential equation
\begin{equation}\label{eq:varphiaa}
\dot{\varphi}=f_{x}(\bar{x},\bar{u})\varphi+f_{u}(\bar{x},\bar{u})\kappa,\quad \varphi(0)=0.
\end{equation}
Thus, some algebraic computations show that our perturbed state is given by
\begin{equation}\label{eq:xeaa}
x^{\epsilon}=\bar{x}+\epsilon\varphi+\mathcal{O}(\epsilon^2).
\end{equation}
Further it can be proven that the solution of \eqref{eq:varphiaa} exists and therefore the term $\frac{\partial x^{\epsilon}}{\partial\epsilon}$ exists.

Let $z\in L^2([0,T];\R^5)$ be some dual variable. It follows that $\int_{0}^{T}\frac{\dd}{\dd t}(z\cdot x^{\epsilon})\,\dd t=z(T)\cdot x^{\epsilon}(T)-z(0)\cdot x_{0}$. 
Append this to the objective functional and we get the following expression
\begin{equation}
J(u^{\epsilon})=\int_{0}^{T}\left[g(x^{\epsilon},u^{\epsilon})+\frac{\dd}{\dd t}(z\cdot x^{\epsilon})\right]\,\dd t+ z(0)\cdot x_{0} - z(T)\cdot x^{\epsilon}(T).\label{eq:Joptaa}
\end{equation}
Since $\bar{u}$ is a local minimizer of $J$, then derivative of $J$ over $\epsilon$ where $\epsilon=0$ exists and equals to zero, in other words $\left.\frac{\partial J(u^{\epsilon})}{\partial \epsilon}\right|_{\epsilon=0}=0$. Working a bit more on \eqref{eq:Joptaa} together with simplifying the result by factorizations, we get
\begin{equation}\label{eq:allaa}
0=\int_{0}^{T}(g_{x}+f_{x}'z+\dot{z})\cdot\left.\frac{\partial x^{\epsilon}}{\partial \epsilon}\right|_{\epsilon=0}\,\dd t+\int_{0}^{T}(g_{u}+f_{u}'z)\cdot\kappa\,\dd t-z(T)\cdot\left.\frac{\partial x^{\epsilon}}{\partial \epsilon}\right|_{\epsilon=0}(T).
\end{equation}
Note from \eqref{eq:varphiaa}-\eqref{eq:xeaa} that a variation of $\kappa$ makes $\frac{\partial x^{\epsilon}}{\partial \epsilon}$ varying. Zeroing the right-hand side of \eqref{eq:allaa} together with taking $\kappa=g_{u}+f_{u}'z$, we get the following summary.

\begin{theorem}
Consider the optimization problem \eqref{eq:optaa}. Let $\bar{u}\in U$ be a local minimizer of $J$ and $\bar{x}\in X$ be the resulting state. Then there exists a dual variable $\bar{z}\in L^2([0,T];\R^5)$ such that the tuple $(\bar{x},\bar{u},\bar{z})$ satisfies the following system
\begin{equation}\label{eq:neccaa}
\dot{x}=\frac{\partial\mathcal{H}}{\partial z}\text{ with }x(0)=x_0\succeq0,\quad 
\dot{z}=-\frac{\partial\mathcal{H}}{\partial x},\quad
\frac{\partial\mathcal{H}}{\partial u}=0, \quad z(T)=0
\end{equation}
for all $t\in[0,T]$. The function $\mathcal{H}(x,u,z):=g(x,u)+z'f(x,u)$ is the Hamiltonian function, meanwhile all equations in \eqref{eq:neccaa} are respectively the \textit{state}-\textit{adjoint}-\textit{gradient} system and the \textit{transversality condition}.
\end{theorem}
The adjoint (with the transversality condition) and gradient equations can now be unfolded as
\begin{subequations}
\label{eq:adjaa}
\begin{align}
\dot{z}_{1} &=-\omega_{x,1}E_1+ (\alpha_{1}+ qu_{1}+ \eta_{1})z_{1}- \alpha_{1}z_{3},&z_1(T)=0\\
\dot{z}_{2} &= -\omega_{x,2}E_2+ (\alpha_{2}+
\eta_{2})z_{2}-\alpha_{2}z_{4},&z_2(T)=0\\
\dot{z}_{3} &=-\omega_{x,3}L_{1}+ (2\sigma_{1}L_{1} +\beta_{1}+ u_{1}+\eta_{3})z_{3}- \beta_{1}z_{5},&z_3(T)=0\\
\dot{z}_{4} &=-\omega_{x,4}L_{2} +(2\sigma_{2}L_{2} +
\beta_{2}+ \eta_{4})z_{4}- \beta_{2}z_{5},&z_4(T)=0\\
\dot{z}_{5}&=-\omega_{x,5}A+(u_{2}+ \eta_{5})z_{5}- \mu p z_{1} - \mu(1-p) z_{2},&z_5(T)=0
\end{align}
\end{subequations}
and
\begin{subequations}
\label{eq:gradaa}
\begin{eqnarray}
\omega_{u,1}u_{1}- qE_{1}z_{1}-L_{3}z_{3}&=&0\\
\omega_{u,2}u_{2} -Az_{5}&=&0.
\end{eqnarray}
\end{subequations}
It is essential that $U$ is bounded as this supports the meaningfulness of the control from the application point of view. For the sake of simplicity, we consider $U=L^2([0,T],[v_1,w_1]\times[v_2,w_2])$ and the according projection
\begin{equation}\label{eq:sataa}
\text{\textbf{sat}}(u)=\max\left(v,\min\left(w,u\right)\right)
\end{equation}
mapping any control $u$ into $U$. 

To solve the optimal control problem~\eqref{eq:optaa} for a bounded control $u \in U$, we first consider the unconstrained convex optimization problem
\begin{equation*}
\text{ find }(x,u)\in L^2(0,T)\times L^2(0,T)\text{ such that }J(u)\rightarrow\min
\end{equation*}
and use the projection~\eqref{eq:sataa} to map the control into a convex set $U$. Since $U$ is a constriction of $L^2(0,T)$, the projection leads us to the optimal solution of~\eqref{eq:optaa}. After all, \ref{alg:graddbb} illustrates our scheme to solve \eqref{eq:optaa}.
\begin{mdframed}
\begin{description}
\item[\setword{\textsc{Algorithm 1}}{alg:graddbb}]
\item[Return:] The tuple $(\hat{x},\hat{u},\hat{J})$.
\item[Step 0] Set $k=0$, an initial guess for the control $u^{k}\in U $, an error tolerance $\epsilon>0$ and an initial step length $\lambda>0$.
\item[Step 1] Compute $x^k_{u}\leftarrow x^{k}(\cdot;u^k)$ and $z^k_{u}\leftarrow z^{k}(\cdot;x^{k}(\cdot;u^k))$ consecutively from the state (with a forward scheme) and adjoint equation (with a backward scheme).
\item[Step 2] Compute the objective functional $J(u^k)$.
\item[Step 3] Compute $\bar{u}^k(x^{k}_{u},z^{k}_{u})$ from the gradient equation.
\item[Step 4] Update $u^{k+1}(\lambda)\leftarrow u^k+\lambda \bar{u}^k$ and $u^{k+1}\leftarrow\text{\textbf{sat}}(u^{k+1})$. Compute $x^{k+1}_{u}$ and $z^{k+1}_{u}$.
\item[Step 5] Compute $J(u^{k+1})$ and set $\Delta J\leftarrow J(u^{k+1})-J(u^{k})$.
\item[Step 6] If $|\Delta J|<\epsilon$, then set $(\hat{x},\hat{u},\hat{J})\leftarrow(x^{k+1}_{u},u^{k+1},J(u^{k+1}))$ and stop.
\item[Step 7] While $\Delta J\geq 0$ do
\begin{enumerate}[label=(7.\arabic*)]
\item Update new $\lambda\leftarrow\arg\min_{s\in[0,\lambda]}\phi(s):=J(u^{k+1}(s))$ where $\phi$ is a quadratic representation of $J$ with respect to the step length $s$. Note that the solution exists since $\phi(0)$, $\phi'(0)$ and $\phi(\lambda)$ can be computed directly. 
\item Compute to the new $u^{k+1}(\lambda)\leftarrow u^k+\lambda \bar{u}^k$ and set $u^{k+1}\leftarrow\text{\textbf{sat}}(u^{k+1})$. Then compute the new $x^{k+1}_{u}$ and $z^{k+1}_{u}$.
\item Compute $J(u^{k+1})$ and set $\Delta J\leftarrow J(u^{k+1})-J(u^{k})$.
\item If $|\Delta J|<\epsilon$, then set $(\hat{x},\hat{u},\hat{J})\leftarrow(x^{k+1}_{u},u^{k+1},J(u^{k+1}))$ and stop.
\end{enumerate}
\item[Step 8] Set $k\leftarrow k+1$ and go to \textbf{Step 3}.
\end{description}
\end{mdframed}
\section{Numerical results}
\label{sec:numerbb}
Several discussions in this section aim at describing the impact of control intervention on the abundance of mosquito population in visual statements. We accentuate in this paper that significant reduction of the population in forward time can be a carry-over from only a cheap optimal control. Realistically, optimal control still suffers from practical drawbacks that one can not easily implement it in real situation. The reason is arisen by the picture of the control that is sometimes full of fluctuations. On the other hand, a constant control benefits from its easy-to-implement scheme as the executors only need to deal with some fixed-fund allotment problem with a flat monthly distribution. Another benefit from using a constant control is that the magnitudes of the basic mosquito offspring number can be traced. This trace deduces some important statements regarding the endemicity of the observed area. It has been preliminarily known that whenever the basic mosquito offspring number is less than one, then two facts arise: the zero equilibrium is stable but a coexistence equilibrium does not exist. If this number is greater than one, then the zero equilibrium is unstable and a coexistence equilibrium exists and is stable. The very last statement means that the endemicity will emerge and stay uninterruptible. However, a constant control always emerges with a higher cost as compared to an optimal control. 

In order to do some numerical tests, all the parameters involved in the model have to be represented in numbers. Table \ref{tab:1} gives the estimates of all parameters.
\begin{table*}[h!]
\begin{center}
\begin{tabular}{r|*{10}{l}}
\hline
Parameter &$M$ & $p$ & $\mu$ & $\eta_1$ & $\eta_2$ & $\eta_3$ & $\eta_4$ & $\eta_5$ & $q$ \\ 
Unit &- &- &$\text{day}^{-1}$ &$\text{day}^{-1}$ &$\text{day}^{-1}$ &$\text{day}^{-1}$ &$\text{day}^{-1}$ &$\text{day}^{-1}$ &- \\ 
Estimated value &$2$ & $0.4$ & $3.1$ & $0.02$ & $0.01$ & $0.002$ & $0.01$ & $0.4$ & $0.04$\\ \hline
\end{tabular}
\end{center}
\end{table*}
\begin{table*}[h!]
\begin{center}
\begin{tabular}{r|*{7}{l}}
\hline
Parameter &$\alpha_1$ & $\alpha_2$ & $\sigma_1$ & $\sigma_2$ & $\beta_1$ & $\beta_2$\\ 
Unit &$\text{day}^{-1}$ &$\text{day}^{-1}$ &$\text{ind.}^{-1}\times\text{day}^{-1}$ &$\text{ind.}^{-1}\times\text{day}^{-1}$ &$\text{day}^{-1}$ &$\text{day}^{-1}$ \\ 
Estimated value &$0.15$ & $0.13$ & $0.004$ & $\frac{\sigma_1}{M}$ & $0.08$ & $0.05$ \\ \hline
\end{tabular}
\end{center}
\end{table*}
\begin{table*}[h!]
\begin{center}
\begin{tabular}{r|*{7}{l}}
\hline
Parameter &$T$ & $\omega_{x,\{1,\cdots,4\}}$\hspace{0.4cm} & $\omega_{x,5}$\hspace{0.4cm} & $\omega_{u,\{1,2\}}$\hspace{0.4cm} & $v$ & $w$ \\ 
Unit&$\text{day}$\hspace{0.4cm} &- &- &- &$\text{day}^{-1}$\hspace{0.4cm} &$\text{day}^{-1}$ \\ 
Estimated value & $150$ &$1$ & $2$ & $4\times 10^4$ & $0$ & $[1,1]$\\ \hline
\end{tabular}
\caption{\label{tab:1}A set of values for all the parameters used in the model.}
\end{center}
\end{table*}
\subsection{Constant control}
To some reasons, the basic mosquito offspring number $\RM(u)$ reflects the hierarchy of endemicity of dengue. The higher $\RM(u)$, the more rapid the mosquito's growth is and therefore, the higher the number of dengue incidences will be. Here, we illustrate some impacts of using constant control on the magnitudes of $\RM(u)$. Alongside with this illustration, we show the dependence of $\RM(0)$ on some parameters whose numerical values may be hard to estimate. Such parameters can be like $p$ (the prevalence probability of mosquito adult to breed indoor) and $\mu$ (the birth rate of eggs). Figs.~\ref{fig:rubb} and~\ref{fig:rpbb} give the illustrations.
\begin{figure}[htbp!]
\begin{minipage}{.495\textwidth}
\includegraphics[width=\textwidth]{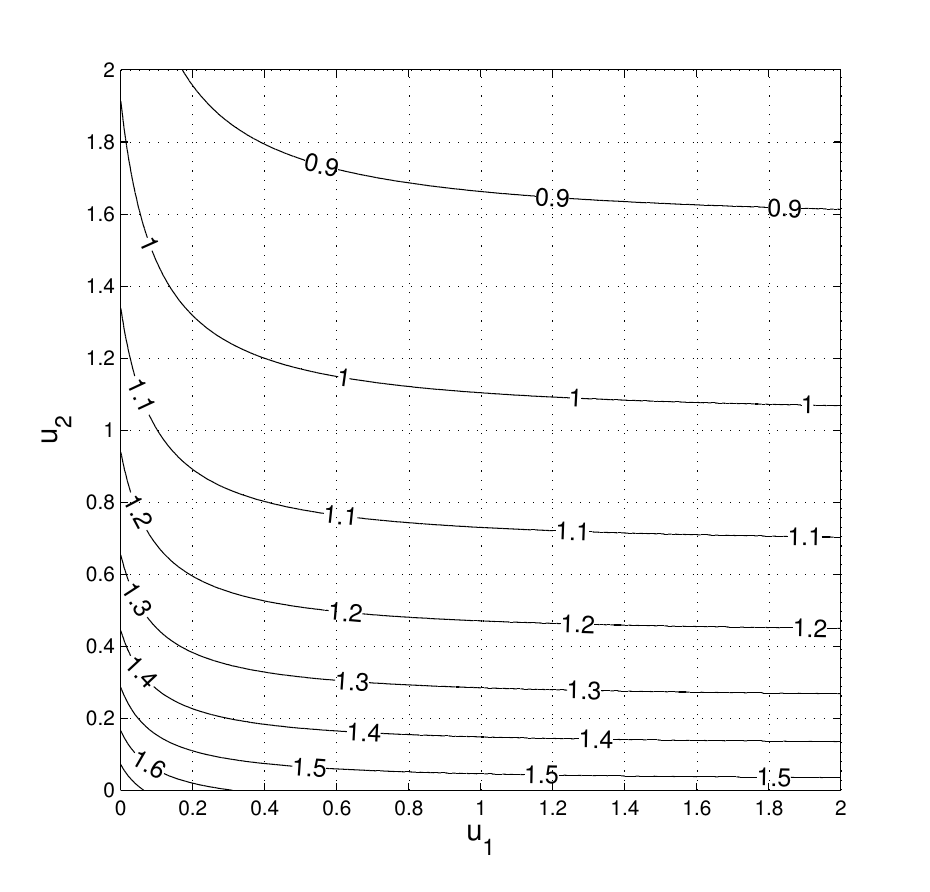}
\caption{\label{fig:rubb}Contour of the basic mosquito offspring number $\RM(u)$ in $(u_1,u_2)$-plane.}
\end{minipage}
\hfill
\begin{minipage}{.495\textwidth}
\includegraphics[width=\textwidth]{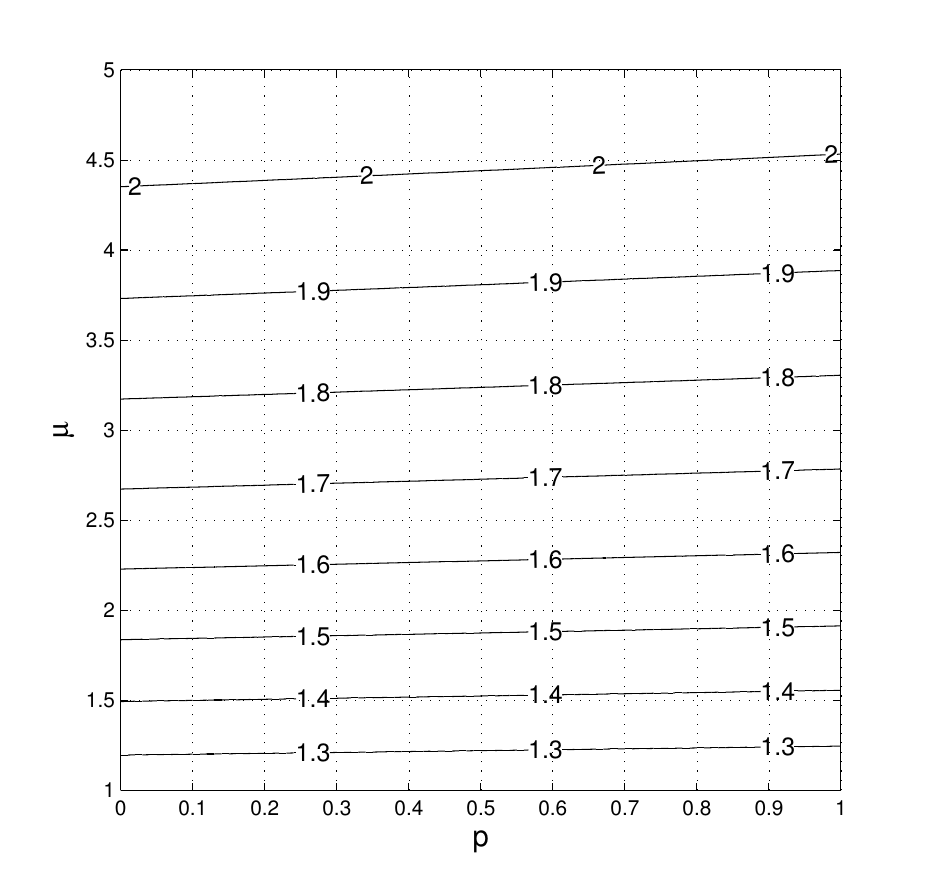}
\caption{\label{fig:rpbb}Contour of the basic mosquito offspring number $\RM(0)$ in $(p,\mu)$-plane.}
\end{minipage}
\end{figure}

From Fig.~\ref{fig:rubb}, some conclusions are drawn. It is clear from the figure that $u_2$ gives a more significant decrease to the magnitude of the basic mosquito offspring number as compared to what $u_1$ does. In the constant control case, each value of control taken in the set of real positive numbers can roughly represent a negative return. A negative return $u_1$ of $1$ simply means that $100\%$ the number of individuals in the previous time point (in day) has to be killed in the current time point by temephos dissemination. In practice, this requirement seems to be hard to achieve due to several technical and spatial heterogeneity problems. From Fig.~\ref{fig:rubb}, a suitable choice of the constant control pair in the range $[0,1]\times[0,1]$ still gives insignificant reduction to the number of mosquitoes. In an extreme case,  killing $100\%$ the number of individuals as in the previous time point using both temephos and fumigation still arises an endemic situation in the observed area. However, to achieve the condition $\RM(u)<1$, one has to produce a high value of negative return. For example, killing $40\%$ (of indoor larvae and $q$ times indoor eggs) and $180\%$ (of adults) as in the previous time point will produce $\RM(u)\approx 0.9$. Similarly, killing respectively $60\%$ and $120\%$ leads to $\RM(u)\lesssim 1$, meaning that there is a possibility for the mosquito population to be completely eradicated in the long run.

Meanwhile from Fig.~\ref{fig:rpbb}, for whatever the values of $(p,\mu)$ are taken in the range $[0,1]\times[1,5]$, the mosquitoes will never die out. The choice of all parameters in the model seems to be suitable with the real situation in an endemic region, that is, the basic mosquito offspring number with the absence of control must be greater than one. This is another reason why we need control intervention. In the figure, $\mu$ appears with a similar sensitivity as $u_2$ in Fig.~\ref{fig:rubb}. It can be seen that $\mu$ is very sensitive, as its slight changes lead to the significant differences in the hierarchy of the endemicity. In reality, $\mu$ can be a number that depends on meteorological parameters. It is believed that the more environmental condition sustain mosquitoes' life, the higher $\mu$ is. The result as in Fig.~\ref{fig:rp} can be another way to confirm that meteorology gives a significant impact on the mosquito abundance in an endemic region. The figure also tells us that, based on our model, to where adult mosquitoes prefer to breed is not really a matter. 

\subsection{Time-variant optimal control}
We consider the trajectory of the solution of the equation $\dot{x}=f(x,0),\,x_0\succeq0$, defined as $x(t;0)$. It can numerically be shown that $x(t;0)$ is monotonic and tends to the coexistence equilibrium $Q_4$ in the long run. Set $t_1=0$, $t_2=50$, $t_3=100$ and
\begin{equation}
X_i:=\{x:\dot{x}=f(x,u),\,t\in[0,T],x_0=x(t_i;0),u\in U \},\quad i=1,2,3.
\end{equation}
We divide our numerical test based on the following three scenarios:
\begin{align*}
&\text{find }(x,u)\in X_1\times U\text{ such that }J(u)\rightarrow\min,\tag{\textbf{Sc-1}}\label{sce:1bb}\\
&\text{find }(x,u)\in X_1\times U\text{ such that }J(u)\rightarrow\min,\tag{\textbf{Sc-2}}\label{sce:2bb}\\
&\text{find }(x,u)\in X_1\times U\text{ such that }J(u)\rightarrow\min.\tag{\textbf{Sc-3}}\label{sce:3bb}
\end{align*}%
Here we want to check the dynamics of the mosquito population in house-scale including its nearest neighborhood. On an average proportion, we let $x(t_1;0)=(8,8,6,6,5)$. Thus we obtained from our preliminary simulation with the absence of control that $x(t_2;0)=(200.41,354.84,73.82,134.85,30.56)$ and $x(t_3;0)=(283.12,514.21,91.19,168.22,39.16)$, respectively. 

One purpose of dividing the initial condition is to check which scenario arises with the cheapest cost, the one starts from $t=t_1$ (the earliest growth time), $t=t_2$ (the peak of outbreak) or $t=t_3$ (the population almost reach the coexistence equilibrium). It may also be the case that the cost increases with respect to the magnitude of the initial condition.  Another purpose is to check which scenario gives the least total endpoints $\lVert x(T;\bar{u})\rVert_1$ and if the optimal control results in positive value for $\lVert x(0;\bar{u})\rVert_1-\lVert x(T;\bar{u})\rVert_1$, then there is a finite number $n>1$ such that the continuation of optimal control strategy will completely eradicate mosquito population in $nT$ days. 

Regarding the definition of the total cost $C(u)$, we define a weighting factor $A>0$ such that
\begin{equation}
C(u):=\frac{A}{2T}\int_{0}^{T}\sum_{i=1}^{2}\omega_{x,i}u_i^2\,\dd t.
\end{equation} 

Numerical results from \ref{sce:1bb} are given as in Figs.~\ref{fig:ineggbb}-\ref{fig:controlbb}.
\begin{figure}[htb]
\begin{minipage}{.495\textwidth}
\includegraphics[width=\textwidth]{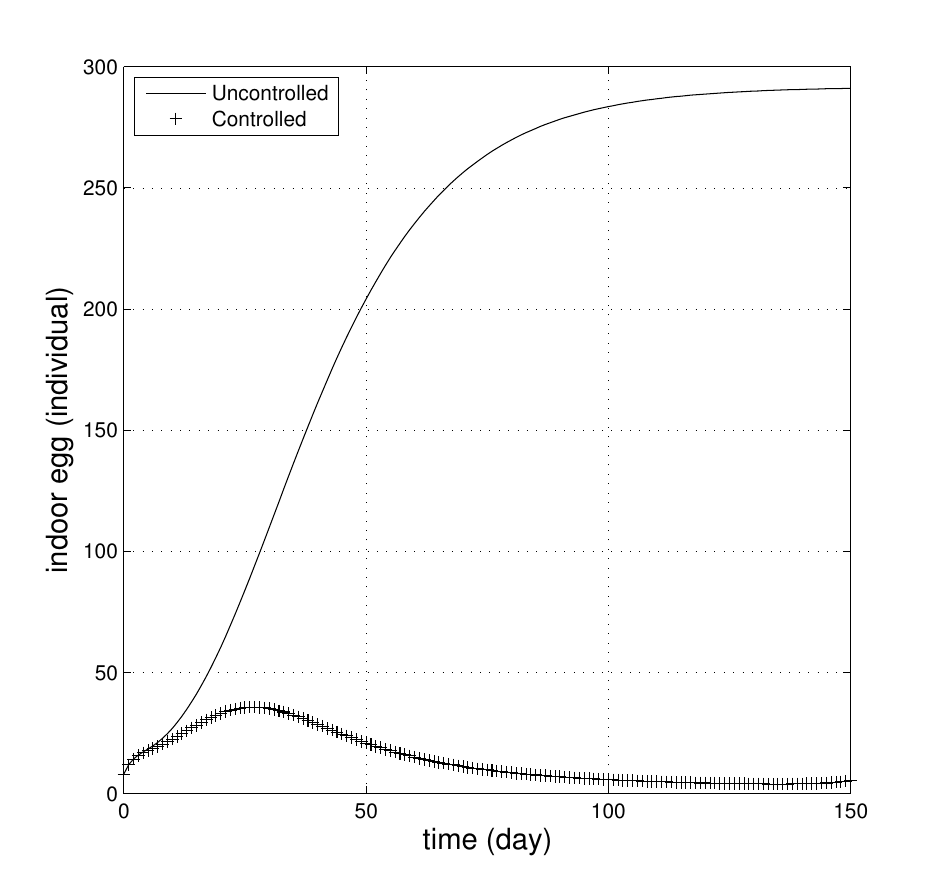}
\caption{\label{fig:ineggbb}Trajectory of indoor egg compartment.}
\end{minipage}
\hfill
\begin{minipage}{.495\textwidth}
\includegraphics[width=\textwidth]{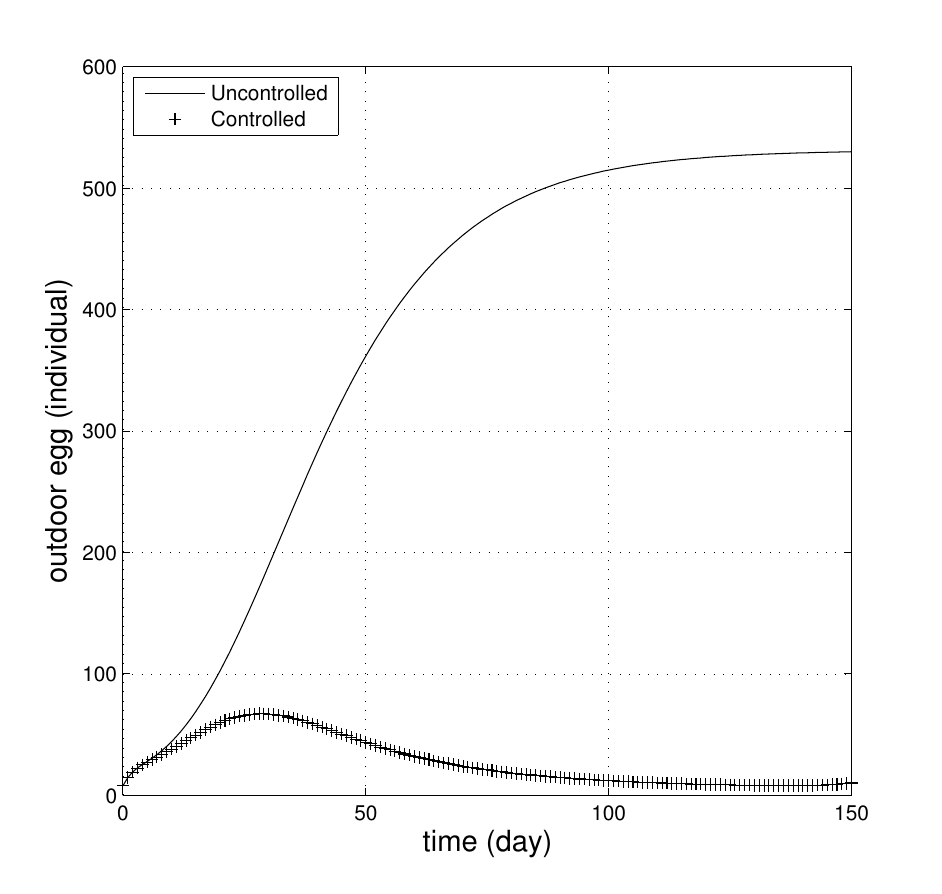}
\caption{\label{fig:outeggbb}Trajectory of outdoor egg compartment.}
\end{minipage}
\end{figure}

\begin{figure}[htb]
\begin{minipage}{.495\textwidth}
\includegraphics[width=\textwidth]{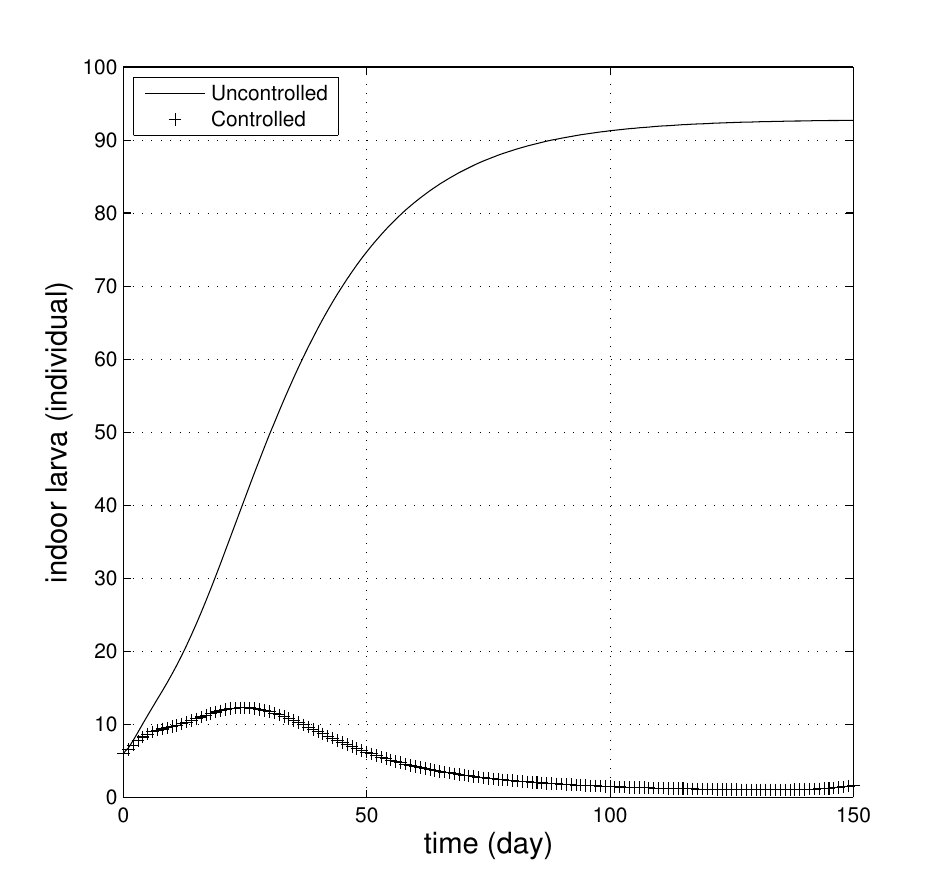}
\caption{\label{fig:inlarvbb}Trajectory of indoor larva compartment.}
\end{minipage}
\hfill
\begin{minipage}{.495\textwidth}
\includegraphics[width=\textwidth]{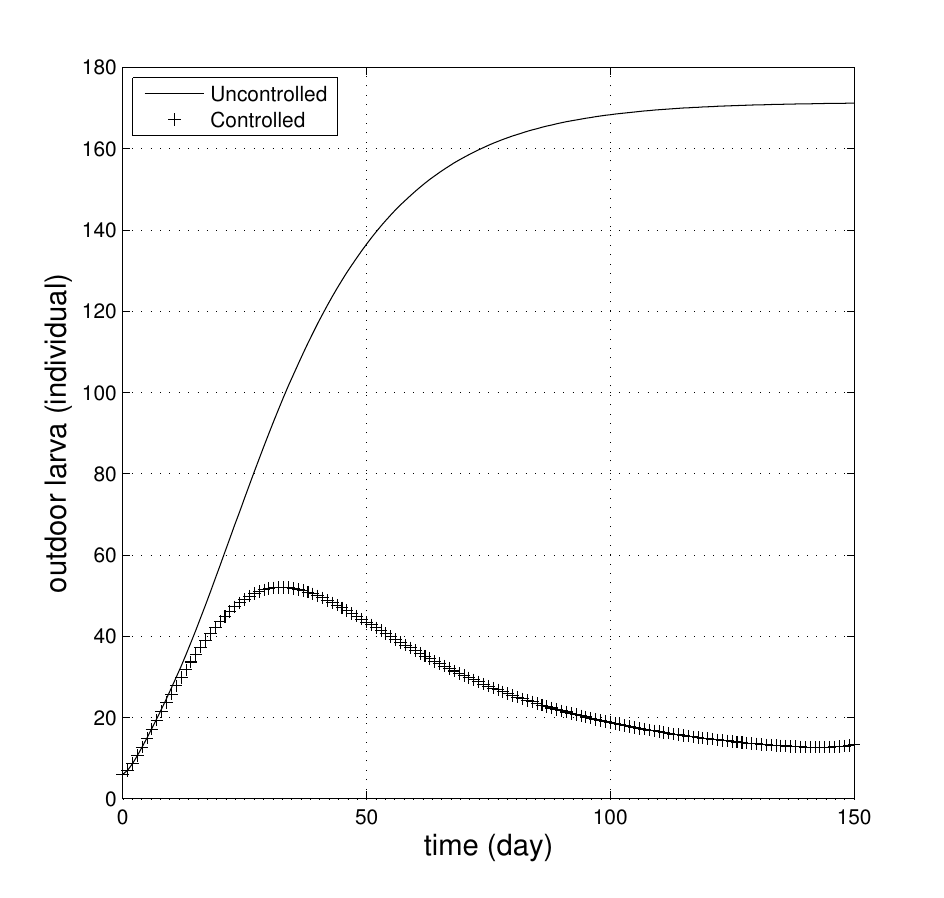}
\caption{\label{fig:outlarvbb}Trajectory of outdoor larva compartment.}
\end{minipage}
\end{figure}

\begin{figure}[htb]
\begin{minipage}{.495\textwidth}
\includegraphics[width=\textwidth]{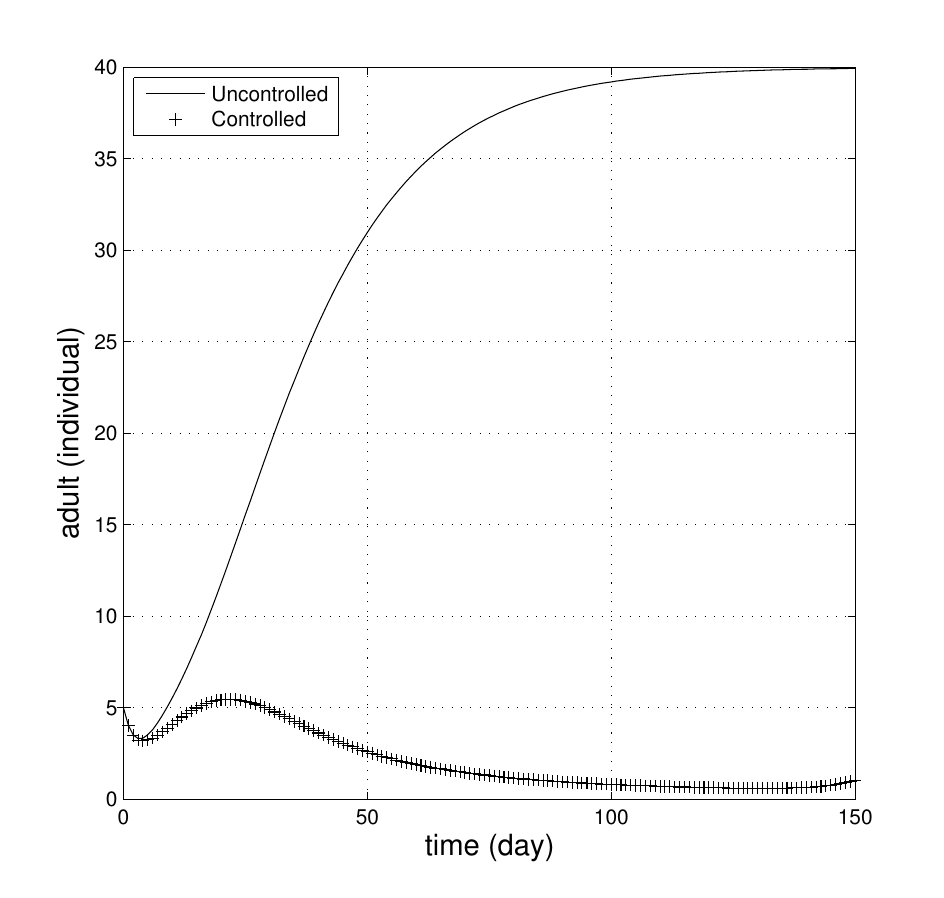}
\caption{\label{fig:adultbb}Trajectory of adult compartment.}
\end{minipage}
\hfill
\begin{minipage}{.495\textwidth}
\includegraphics[width=\textwidth]{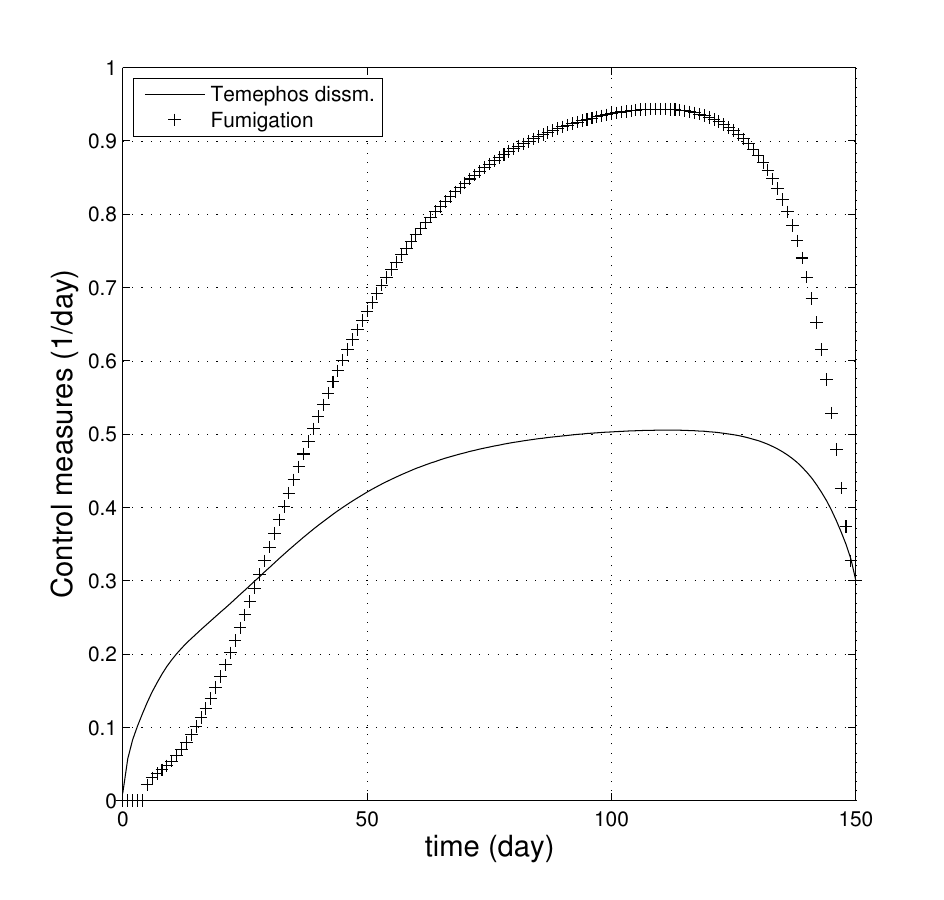}
\caption{\label{fig:controlbb}Trajectory of optimal control.}
\end{minipage}
\end{figure}

At a first look, it is clear from Figs.~\ref{fig:ineggbb}-\ref{fig:adultbb} that the optimal control makes significant reductions of all compartments. Similar results can be found from both \ref{sce:2bb} and \ref{sce:3bb}. It turns out that by using the data from Table \ref{tab:1}, fumigation is preferred over temephos during the application of the control. Figs.~\ref{fig:rubb} and~\ref{fig:controlbb} tell us so. For further managerial reference, Table \ref{tab2} represents the performance of the optimal control strategy within all scenarios. We also highlight the optimal control performance when one scheme vanishes.
\begin{table*}[h!]
\centering
\begin{tabular}{l|l|l|l}
\hline
$C(u)$, with $A=1$ & \ref{sce:1bb} & \ref{sce:2bb} & \ref{sce:3bb} \\ \hline
$u_1\neq 0,\,u_2=0$ & $7.4216\times 10^2$ & $2.9754\times10^{2}$ & $2.2304\times10^{2}$ \\ 
$u_1= 0,\,u_2\neq 0$ & $6.6658\times 10^3$ & $9.5861\times 10^3$ & $1.0150\times10^{4}$\\ $u_1\neq 0,\,u_2\neq 0$ & $1.4004\times10^{4}$ & $1.6217\times10^{4}$ & $1.6329\times10^{4}$ \\ \hline \hline
$\lVert x(0;\bar{u})\rVert_1$, $\lVert x(T;\bar{u})\rVert_1$ & \ref{sce:1bb} & \ref{sce:2bb} & \ref{sce:3bb} \\ \hline
$u_1\neq 0,\,u_2=0$ & 33, 839.57 & 794.50, 889.13 & $1.09\times10^{3}$, 901.57\\ 
$u_1= 0,\,u_2\neq 0$ & 33, 157.66 & 794.50, 154.73 & $1.09\times10^{3}$, 156.23\\ 
$u_1\neq 0,\,u_2\neq 0$ & 33, 31.62 & 794.50, 35.34 & $1.09\times10^{3}$, 37.52 \\ \hline \hline
$\frac{1}{T}\sum_{i=1}^{5}\lVert x_i(t;\bar{u})\rVert_{L^1}$ & \ref{sce:1bb} & \ref{sce:2bb}  & \ref{sce:3bb} \\ \hline
$u_1\neq 0,\,u_2=0$ & $610.9374$ & $946.5915$ & $1.0169\times 10^3$ \\ 
$u_1= 0,\,u_2\neq 0$ & $196.5743$ & $225.8798$ & $235.1042$\\ 
$u_1\neq 0,\,u_2\neq 0$ & $74.7872$ & $119.5978$ & $132.6077$ \\ \hline
\end{tabular}
\caption{\label{tab2}The performance of optimal control.} 
\end{table*} 

From the table, we conclude:
\begin{enumerate}
\item A complete control intervention using the two schemes appears with the highest cost for each of the three tested scenarios compared to the ones when one scheme vanishes. It is also noted that the cost slightly increases from \ref{sce:1bb} to \ref{sce:3bb}, meaning there is a positive correlation between the initial condition taken for simulation and the cost. If we consider the epidemiology of dengue amongst humans and mosquitoes, this fact can help us to decide when we should start to conduct an eradication programme. It appears as well in the table that there is a notable difference between the cost for maintaining only temephos dissemination and only fumigation. Mathematically, this phenomenon arises due to the choice of the trade-off coefficients $\omega_{x}$. In this case, we assume that the adults need to be accounted for more attention, hence we put a higher value for $\omega_{x,5}$. However, this does not always mean that taking more consideration into the reduction of adults requires a higher measure of fumigation rather than that of temephos. In our simulation, even if we take $\omega_{x,5}$ similar to the others, the result is more or less similar to the one as depicted in Fig.~\ref{fig:controlbb}. The reader can play around with these trade-off coefficients in order to get the result that suits the real situation best.
\item From the table, it is seen that a complete combination of the two control schemes can extremely reduce the total endpoints. Deyploying only temephos leads us to the worst case, i.e. there will be up to 800 total individuals remaining alive after $T$ days of treatment. Meanwhile, deploying only fumigation results in moderate total endpoints. If we compute $\lVert x(0;\bar{u})\rVert_1-\lVert x(T;\bar{u})\rVert_1$, then we would highly recommend to apply both control schemes as an integrated programme. 
\item We interpret $\frac{1}{T}\sum_{i=1}^{5}\lVert x_i(t;\bar{u})\rVert_{L^1}$ as the average number of the total population size at each time during observation. The higher its value, the higher the number of dengue incidences that may occur. Again, deploying only temephos leads to the worst case. We highlight that the complete control with two schemes gives the best reduction. Now from the table we know that a preventive act of starting the control programme when the number of individuals is small and a combination of both control schemes will bring us the best results: least cost, fewest total endpoints and fewest number of individuals living during the treatment. However, due to practical limitations, this ideal situation may not be achieved. Knowing the average density of mosquito population in a house in an endemic region, we can derive a similar results to those in Table \ref{tab2} that lead to  the best decision for a suitable control management.
\end{enumerate}

\section{Conclusion}

We have an optimal control model of mosquito population dynamics with indoor-outdoor life cycle classification. This model captures the situation in most dengue endemic regions where mosquitoes originate from indoor as well as outdoor. Two schematic control measures are added into the model as regulators that reduce the evolution of mosquito population in forward time. The first measure represents the rate of use of temephos, a chemical that can kill larvae and a small percentage of eggs in some indoor spots. The second measure represents the rate of use of fumigation, which targets adults. Some brief underlying results from the work in this paper are given as follows.

In the constant control case, the proposed model enumerates several biological meaningfulnesses. First, the evolution of mosquito population in forward time results in positive values. Second, with the given estimate of parameters and initial condition, it can numerically be seen that the evolution of total population remains bounded for all $t>0$ by the positive number $\max\{\lVert x(0;u)\rVert_1,\lVert Q_4\rVert_1\}$. Third, if the basic mosquito offspring number is less than one, then two conditions arise: (i) the zero equilibrium is locally asymptotically stable and (ii) a coexistence equilibrium does not exist. If the basic mosquito offspring number is greater than one, then three conditions arise: (i) the zero equilibrium is unstable, (ii) a coexistence equilibrium exists and (iii) is globally asymptotically stable in the nonnegative orthant.

The optimal control results as displayed in Table \ref{tab2} can be used as a reference in a decision-making process. A brief conclusion states that, in an endemic region, the best mosquito control impacts are produced if the control starts when the number of individuals is as small as possible, also with a combination of the two simultaneous schemes proposed in this paper. Both in constant control and optimal control cases, the implementation of fumigation is preferred over that of temephos. 

\section*{Acknowledgements}
The first and third authors acknowledge the financial support from Indonesia Endowment Fund for Education (LPDP) and Indonesia Directorate of Higher Education (DIKTI) on behalf of ITB Research Grant 2011. The authors are grateful to Prof. Mick Roberts (Massey University, NZ) and Prof. Neville Fowkes (University of Western Australia) and also the handling editor for their constructive comments and recommendations.


\end{document}